\newtheorem{theorem}{Theorem}
\newtheorem{prop}[theorem]{Proposition}
\newtheorem{lemma}[theorem]{Lemma}
\newtheorem{definition}[theorem]{Definition}
\newtheorem{remark}[theorem]{Remark}
\newtheorem{example}[theorem]{Example}
\newenvironment{proof}[1][Proof]{\textbf{#1.} }{\ \rule{0.5em}{0.5em}}
\journal{arXiv}
\begin{document}

\begin{frontmatter}



\title{Boundary controlled irreversible port-Hamiltonian systems}


\author[First]{Hector Ramirez} 
\author[Second]{Yann Le Gorrec}
\author[Third]{Bernhard Maschke} 
\address[First]{Universidad T\'{e}cnica Federico Santa Mar\'{i}a, Valparaiso, Chile. (hector.ramireze@usm.cl).}
\address[Second]{D\'epartement d'Automatique et Syst\`emes Micro-M\'ecatroniques, FEMTO-ST UMR CNRS 6174, Universit\'{e} de Bourgogne Franche Comt\'{e}, 26 chemin de l'\'epitaphe, F-25030 Besan\c{c}on, France. (legorrec@femto-st.fr).}
\address[Third]{Laboratoire d'Automatique et G\'{e}nie des Proc\'{e}d\'{e}s CNRS UMR 5007, Universit\'{e} de Lyon, Universit\'{e} Lyon 1, F-69622 Villeurbanne, France (maschke@lagep.univ-lyon1.fr)}

\begin{abstract}
Boundary controlled irreversible port-Hamiltonian systems (BC-IPHS) on 1-dimensional spatial domains are defined by extending the formulation of reversible BC-PHS to irreversible thermodynamic systems controlled at the boundaries of their spatial domains. The structure of BC-IPHS has clear physical interpretation, characterizing the coupling between energy storing and energy dissipating elements. By extending the definition of boundary port variables of BC-PHS to deal with the dissipative terms, a set of boundary port variables are defined such that BC-IPHS are passive with respect to a given set of conjugated inputs and outputs. As for finite dimensional IPHS, the first and second principle are satisfied as a structural property. Several examples are given to illustrate the proposed approach.
\end{abstract}

\begin{keyword}
Port-Hamiltonian systems, irreversible thermodynamics, infinite dimensional systems
\end{keyword}

\end{frontmatter}



\section{Introduction}

The control of processes in Chemical Engineering is a highly difficult problem due to the nonlinearities induced as well by their thermodynamic properties as their flux relations. One very fruitful approach for the synthesis of nonlinear controllers is to use the properties of the dynamical models arising from first principle modeling such as symmetries, invariants and more generally balance equations of particular thermodynamic potential functions such as the entropy. These balance equations may be used as dissipation inequalities in passivity-based control as introduced in \cite{Willems_72} and is now a well-developed branch of control \citep{Schaft:book,BrogliatoSpringer20}.

In the case of chemical engineering processes, various thermodynamic potentials, such as the entropy or Helmholtz free energy, may be used as storage functions in a dissipation inequality \citep{AlYd:96} and may be used for control design methods based on Lyapunov control functions \citep{CHRISTOFIDES_1998,Christofides_2001} and passivity \citep{AlYd:96,Alonso_Ydstie_01,Alonso_2002,SCHAUM_2018}.

The derivation of these Lyapunov functions and control Lyapunov functions are in most cases, based on the axioms of Equilibrium and Irreversible Thermodynamics and the structure of the dynamical models for these systems. A variety of such "thermodynamic" dynamical models have been suggested which are generalization of gradient control systems \citep{CortesCrouch05}, Lagrangian control systems \citep{OrtegaBook98}, Hamiltonian control systems \citep{Brockett77,Schaft86,Nijmeijer_book_1990}, \citep[chap. 7]{MarsdenBook92} or Port Hamiltonian systems \citep{maschkeNOLCOS92,schaftAEU95,Geoplex09,Jeltsema_FTSC_14} in the sense that they should account both for the conservation of the total energy and for the irreversible entropy production.

A first class of these thermodynamic control systems is defined by pseudo-gradient systems \citep{FaDo:2010,Favache2011}, meaning that they are redefined with respect to a pseudo-metric, in a very similar way as suggested for electrical circuits in \cite{Brayton_Moser_I_1964,smale72}. A second class of systems is defined as metriplectic systems (sum of Hamiltonian and gradient systems) with one or two generating functions \citep{GENERIC_I_PhysRevE, GENERIC_II_PhysRevE, Muschik00, Ramirez_JPCONT_2009, Hoang2011, Hoang_JPC_2012}. A third class of systems is defined as nonlinearly constrained Lagrangian systems \citep{Gay_Balmaz_Entropy_2018_DiscrOpenTherm}. A fourth class of systems is defined as implicit Hamiltonian control systems, in the sense that they are defined on a submanifold of some embedding space (the Thermodynamic Phase space or its symplectic extension), by control Hamiltonian systems defined on contact manifolds \citep{MrNuScSa:91,EberardRMP07, Favache_CES10,Ramirez_CES_2013,Ramirez_SCL_2013,Ramirez_TAC_2017,Merker_ContMechTherm_13} or their symplectization \citep{Schaft_Maschke_Entropy_2018}.

In this paper we shall be interested in another class, namely the Irreversible port-Hamiltonian System (IPHS) which were suggested in \citep{Ramirez_CES_2013} as an extension of port-Hamiltonian systems using a quasi-Poisson bracket and still embedding the irreversible entropy creation. For processes described by lumped models it has been shown that the formalism encompasses a large and general class of irreversible thermodynamic systems, such as heat-exchangers, chemical reactions, chemical reaction networks and coupled mechanic-thermodynamic systems \citep{Ramirez_EJC_2013}. Moreover, using the definition of the availability function \citep{Keenan_1951,Alonso_Ydstie_01} the IPHS structure has recently been employed to exploit the thermodynamic properties of irreversible processes to derive non-linear passivity based controllers \citep{Ramirez_Automatica_2016} .

This paper extends the IPHS formulation to systems defined on 1-dimensional spatial domains building on the framework of boundary controlled PHS \citep{SchaftMaschke2002} using the formulation on 1D spaces \citep{gorrec:2005,Jacob_Zwart_2012} with the aim of using the passivity-based and geometric control design methods, such as methods based on invariance \citep{GODASI_2002}, Casimir functions \citep{Macchelli_TAC_2017,Macchelli_TAC_2020} or (non)-linear dynamic boundary control \citep{Ramirez_TAC_2014,Ramirez_Automatica_2017}. Note that a first approach in this line was given in \citep{Ramirez_2016_TFMST} for a diffusion process.

The paper is organized as follows. In Section \ref{section_p-system} a 1-D isentropic fluid, with and without dissipation, is presented as motivating example to contextualize the proposed irreversible model. Section \ref{section_BC_IPHS} presents the main contribution of the paper, namely the definition of boundary controlled IPHS on a 1D spatial domain. Section \ref{section_energy_entropy} gives two important lemmas regarding the passivity of the system, namely the energy conservation and the irreversible entropy production. In Section \ref{section_diffusion_reaction} the IPHS model of a general diffusion-reaction process is presented, and finally in Section \ref{section_conclusion} some conclusions and comments on future work are given. An Appendix with the basic definitions of boundary controlled PHS is also included. 

\section{Motivating example: 1-D  compressible fluid}\label{section_fluids}\label{section_p-system}

\subsection{The isentropic fluid: the reversible case}
Let us first consider the dynamic behavior of an 1-D isentropic fluid in Lagrangian coordinates, also known as \emph{p-system}, and recall its Port Hamiltonian formulation \citep{Maschke_2005_Springer_Chapter}. The 1-D spatial domain is the interval $[a,b] \ni z,\,a,\,b\in\mathbb{R},\,a<b$. Using as state variables the specific volume $\phi(t,z)$ and the velocity $\upsilon(t,z)$  of the fluid, the dynamical model of the fluid is given by the mass balance equation (expressed in terms of the specific volume) and the momentum balance equation (expressed in terms of the velocity seen as "momentum density") 
\begin{align}\label{isentfluid1}
\frac{\partial \phi}{\partial t} (t,z)& = \frac{\partial \upsilon}{\partial z}(t,z) \\ \label{isentfluid2}
\frac{\partial \upsilon}{\partial t}(t,z)& = -\frac{\partial p}{\partial z}(t,z) 
\end{align}
where $p(\phi)$ is the pressure of the fluid. The Hamiltonian formulation is obtained by considering the total energy of the system which consists in the sum of the kinetic and the internal energy, denoting the internal energy density by  $u(\phi)$
\[
H\left(\upsilon,\, \phi\right)=\int_a^b \left( \frac{1}{2} \upsilon^2+ u(\phi)\right) d z
\]
The variational derivative of the total energy yields  $\frac{\delta H}{\delta \upsilon}=\upsilon$ and $\frac{\delta H}{\delta \phi}=\frac{\partial u}{\partial \phi}=-p$ and the system (\ref{isentfluid1})-(\ref{isentfluid2}) may be written as the \emph{Hamiltonian system}
\begin{equation}\label{p_system_HamiltonianSys}
\begin{bmatrix}
	\frac{\partial \phi}{\partial t}\\
	\frac{\partial \upsilon}{\partial t}
\end{bmatrix}=
P_1\frac{\partial}{\partial z} 
\left(\begin{bmatrix}
	\frac{\delta H}{\delta \phi} \\
	\frac{\delta H}{\delta \upsilon} 
\end{bmatrix}\right)
\end{equation}
where $P_1=\begin{bmatrix}
0 & 1 \\
1 & 0
\end{bmatrix}$ 
and $P_1\frac{\partial}{\partial z}$ is a \emph{Hamiltonian operator} \citep{olver93book,SchaftMaschke2002}. 
Considering an open system, when there is mass and energy flow through the boundary (at the points $a$ and $b$), the Hamiltonian system (\ref{p_system_HamiltonianSys}) is completed with a pair of \emph{boundary port input and output} $\left(v,y\right)$ as follows
\begin{equation*}
    \begin{bmatrix}
    	v \\ y    \end{bmatrix} =
     \begin{bmatrix}
     	W_{B} \\  W_{C}
    \end{bmatrix}
    \begin{bmatrix}
    	 \frac{\delta H}{\delta \phi}(b) \\
     	\frac{\delta H}{\delta v}(b) \\
     	\frac{\delta H}{\delta \phi}(a) \\
     	\frac{\delta H}{\delta v}(a) \\
    \end{bmatrix} = 
    \begin{bmatrix}
    1 & 0 & 0 & 0 \\
    	0 & 0 & -1 & 0 \\
	 0 & 1 & 0 & 0 \\
   	0 & 0 & 0 & 1
    \end{bmatrix}
    \begin{bmatrix}
    	 -p(t,b) \\
	 \upsilon(t,b) \\
    	-p(t,a)\\ \upsilon(t,a)
    \end{bmatrix}
\end{equation*}
from where $v(t) =\begin{bmatrix}
-p(t,b)\\
p(t,a)
\end{bmatrix}$ and $y(t)=\begin{bmatrix}
\upsilon(t,b) \\
\upsilon(t,a) 
\end{bmatrix}$.
It is direct to verify that this choice of inputs and outputs satisfies (\ref{W_B-W_C-BC}), hence the change of energy of the system is given by $\dot{H}(t)=y^\top(t) v(t)$.

\subsection{The isentropic fluid: the irreversible case}
Consider that there is dissipation in the system given by viscous damping. The balance equations are then given by 
\begin{align} 
\frac{\partial \phi}{\partial t} (t,z)& = \frac{\partial \upsilon}{\partial z}(t,z) \label{nonisentfluid1} \\
\frac{\partial \upsilon}{\partial t}(t,z)& = -\frac{\partial p}{\partial z}(t,z) -\frac{\partial \tau}{\partial z}(t,z) \label{nonisentfluid2}
\end{align}
where $\tau$ is the viscous tensor defined as $\tau = -\hat{\mu}\frac{\partial \upsilon}{\partial z}$, with $\hat{\mu}$ the viscous damping coefficient.
The system contains dissipation or rather a irreversible phenomenon induced by the viscosity of the fluid. Therefore we account for the thermal domain and consider Gibbs' equation $du=-pd\phi+Tds$ where $s$ denotes the entropy density and $T$ the temperature. The total energy of the system is still the sum of the kinetic and the internal energy
\[
H\left(\upsilon,\, \phi, \, s\right)=\int_a^b \left( \frac{1}{2} \upsilon^2+ u \left( \phi, s \right) \right) d z
\]
The mass balance (\ref{nonisentfluid1}) and momentum balance equations (\ref{nonisentfluid2}) are then augmented with the entropy balance equation
\begin{equation*}
\frac{\partial s}{\partial t} (t,z) =\frac{\hat{\mu}}{T} \left(\frac{\partial \upsilon}{\partial z}\right)^2(t,z) 
\end{equation*}
and the system of balance equations may be written as the quasi-Hamiltonian system
\begin{equation*}
\begin{bmatrix}
\frac{\partial \phi}{\partial t}\\
\frac{\partial \upsilon}{\partial t}\\
\frac{\partial s}{\partial t}
\end{bmatrix} = 
\begin{bmatrix}
0 & \frac{\partial \left(\cdot \right)}{\partial z}  & 0 \\
\frac{\partial \left(\cdot\right)}{\partial z}  & 0 & \frac{\partial}{\partial z}\left(\frac{\hat{\mu}}{T} \left(\frac{\partial \upsilon}{\partial z}\right)\left(\cdot \right)\right) \\
0 & \frac{\hat{\mu}}{T} \left(\frac{\partial \upsilon}{\partial z}\right)\frac{\partial \left(\cdot \right) }{\partial z} & 0 \\
\end{bmatrix}
\left(\begin{bmatrix}
\frac{\delta H}{\delta \phi} \\
\frac{\delta H}{\delta \upsilon}\\
\frac{\delta H}{\delta s} 
\end{bmatrix}\right)
\end{equation*}
As the operator depends on the co-energy variable $T$, it is only a quasi-Hamiltonian operator.
 
In the following section this latter formulation will be used to define boundary controlled irreversible port-Hamiltonian systems, extending the framework originally proposed in \cite{Ramirez_CES_2013} for irreversible thermodynamic systems on finite dimensional spaces to systems defined on infinite dimensional spaces.   

\section{Boundary controlled IPHS}\label{section_BC_IPHS}

In this section, we introduce the Irreversible Boundary Port Hamiltonian System (IPHS) defined on a 1D spatial domain $z \in [a,b],\,a,\,b\in\mathbb{R},\,a<b$. The state variables of the system are the $n+1$ \emph{extensive variables}\footnote{A variable is qualified as extensive when it characterizes the thermodynamic state of the system and its total value is given by the sum of its constituting parts.}. The following partition of the state vector shall be considered: the first $n$ variables by $x=[q_1,\ldots,q_n]^\top \in\mathbb{R}^n$ and the entropy density by $s \in \mathbb{R}$. The thermodynamic properties of the system are expressed by Gibbs' equation \citep{Callen85}, which we give here in its local form with pairs of specific energy-conjugated variables \citep[Chapter 3]{Geoplex09} 
\begin{equation*}
dh=Tds + p_i\sum_{i=1}^ndq_i
\end{equation*} 
where $T$ is the temperature, conjugated to the entropy density, and the variables $p_i$ denote the \emph{intensive variables}, which are conjugated to the $q_i$ variables. Gibbs' equation is here understood in a general context in order to account for coupled thermo-electro/magnetic/mechanical systems. Gibbs' equation is equivalent to the existence of an energy functional
\begin{equation*}
H(x,s)=\int_a^b h\left(x(z),s(z)\right) dz
\end{equation*} 
where $h(x,s)$ is the energy density function. The total entropy functional is denoted by
\begin{equation*}
S(t)=\int_a^b s(z,t) dz
\end{equation*}
The following pseudo (locally defined) brackets will be used to define the thermodynamic driving forces of the system
\begin{equation}\label{pseudo-bracket}
\begin{split}
\left\{\Gamma|{\mathcal G}|\Omega \right\} & =
\begin{bmatrix}
\frac{\delta \Gamma}{\delta x} \\ 
\frac{\delta \Gamma}{\delta s}
\end{bmatrix}
\begin{bmatrix}
0 & {\mathcal G}\\
-{\mathcal G}^* & 0 \end{bmatrix}
\begin{bmatrix}
\frac{\delta \Omega}{\delta x} \\ 
\frac{\delta \Omega}{\delta s}
\end{bmatrix}, \\ 
\left\{\Gamma|\Omega\right\} & = \frac{\delta \Gamma}{\delta s}^\top \left( \frac{\partial}{\partial z}\frac{\delta \Omega}{\delta s}\right)
\end{split}
\end{equation} 
for some smooth functions $\Gamma$, $\Omega$ and $\mathcal G$.

We shall first define a system of balance equations in terms of an Irreversible (quasi-)Hamiltonian system.
\begin{definition}\label{definition_IPHS}
An infinite dimensional IPHS undergoing $m$ irreversible processes is defined by the PDE
\begin{multline}\label{IPHS-BCS}
\frac{\partial }{\partial t} 
\begin{bmatrix} 
x(t,z)\\ 
s(t,z)
\end{bmatrix}=
\begin{bmatrix}
P_0 & G_0\mathbf{R_0(x)}\\ 
-\mathbf{R_0(x)^\top }G_0^\top &0
\end{bmatrix}
\begin{bmatrix}
\frac{\delta H}{\delta x}(t,z)\\ 
\frac{\delta H}{\delta s}(t,z) 
\end{bmatrix}+
  \\ \begin{bmatrix}P_1 \frac{\partial (.)}{\partial z}  & \frac{\partial \left(G_1 \mathbf{R_1(x)} .\right) }{\partial z}\\ \mathbf{R_1(x)}^\top G_1^\top  \frac{\partial\left(.\right) }{\partial z}&g_s \mathbf{r_s(x)} \frac{\partial \left(.\right)}{\partial z}+\frac{\partial \left(g_s \mathbf{r_s(x)}.\right)}{\partial z}\end{bmatrix}\begin{bmatrix}\frac{\delta H}{\delta x}(t,z)\\ \frac{\delta H}{\delta s}(t,z) \end{bmatrix}
\end{multline}
where $P_0=-{P}^\top_0 \in \mathbb{R}^{n\times n}$, $P_1={P}^\top_1 \in \mathbb{R}^{n\times n}$, $g_s  \in \mathbb{R}$, $G_0 \in \mathbb{R}^{n\times m}$, $G_1 \in \mathbb{R}^{n\times m}$ with $m$ the number of states involved in the entropy production. $\mathbf{R_0}\in \mathbb{R}^{m\times 1}$, $\mathbf{R_1}\in \mathbb{R}^{m\times 1}$ and $r_{s}\in \mathbb{R}$ stand for the vectors of modulated driving forces with  
\begin{equation*}
R_{0,i}=\gamma_{0,i}\left(x,z,\tfrac{\delta H}{\delta x}\right) \left\{S|G_0(:,i)|H\right\}
\end{equation*}
\begin{equation*}
R_{1,i}=\gamma_{1,i}\left(x,z,\tfrac{\delta H}{\delta x}\right) \left\{S|G_1(:,i)\tfrac{\partial}{\partial z}|H\right\}
\end{equation*}
and  
\begin{equation*}
r_{s}=\gamma_s\left(x,z,\tfrac{\delta H}{\delta x}\right) \left\{S|H\right\}
\end{equation*}
with $\gamma_{k,i}\left(x,z,\tfrac{\delta H}{\delta x}\right), \gamma_{s}\left(x,z,\tfrac{\delta H}{\delta x}\right):\mathbb{R}^{n}\rightarrow\mathbb{R}$, $\gamma_{k,i}, \gamma_{s}\geq0$, non-linear positive functions.
\end{definition}
\begin{remark}
Definition \ref{definition_IPHS} is an extension of the definition of IPHS for finite dimensional systems presented in \cite{Ramirez_CES_2013,Ramirez_EJC_2013} for thermodynamic systems defined on 1D spatial domains. In fact it is not difficult to verify that if finite dimensional system is considered only the zero order operator matrix is present in (\ref{IPHS-BCS}) and Definition \ref{definition_IPHS} reduces to 
\begin{equation*}
\frac{d}{d t} 
\begin{bmatrix} 
x(t,z)\\ 
s(t,z)
\end{bmatrix}=
\begin{bmatrix}
P_0 & G_0\mathbf{R_0(x)}\\ 
-\mathbf{R_0(x)^\top }G_0^\top &0
\end{bmatrix}
\begin{bmatrix}
\frac{\delta H}{\delta x}(t,z)\\ 
\frac{\delta H}{\delta s}(t,z) 
\end{bmatrix}
\end{equation*}
which is equivalent to the definition in \cite{Ramirez_CES_2013,Ramirez_EJC_2013} for the case $m=1$ or \cite{Ramirez_MTNS_2014,Ramirez_Automatica_2016} for $m>1$.
\end{remark}
\begin{remark}
The pseudo-brackets (\ref{pseudo-bracket}) define the thermodynamic driving forces of the process. Notice that when considering a finite dimensional system (\ref{pseudo-bracket}) is equal to the pseudo-bracket defined in \cite{Ramirez_CES_2013,Ramirez_EJC_2013} for lumped IPHS.
\end{remark}
In the following definition the above system is completed with boundary port variables. 
\begin{definition}\label{definition_BC_IPHS}
A boundary controlled IPHS (BC-IPHS) is an infinite dimensional IPHS according to Definition \ref{definition_IPHS} equipped with boundary inputs and outputs, defined as the linear combinations of the boundary port variables, respectively
\begin{align*}\label{eq:IPHS_input}
v(t)  & = W_{B}
\begin{bmatrix}
e(t,b) \\
e(t,a)
\end{bmatrix},&  &y(t)  = W_{C}
\begin{bmatrix}
e(t,b) \\
e(t,a)
\end{bmatrix} 
\end{align*}
where the boundary port variables are
\begin{equation}\label{boundary_variables_IPHS}
e(t,z)  = 
\begin{bmatrix}
\frac{\delta H}{\delta x}(t,z) \\
 \mathbf{R(x)}\frac{\delta H}{\delta s}(t,z) \\
\end{bmatrix},
\end{equation}
with $\mathbf{R(x)}=\begin{bmatrix}1 & \mathbf{R_1(x)} & \mathbf{r_s(x)} \end{bmatrix}^\top$ and
\begin{align*}
W_{B} &= 
\begin{bmatrix}
\frac{1}{\sqrt{2}}\left(\Xi_2 + \Xi_1P_{ep} \right)M_p & \frac{1}{\sqrt{2}} \left(\Xi_2 - \Xi_1 P_{ep} \right)M_p
\end{bmatrix},
\\
W_{C}& = 
\begin{bmatrix} 
\frac{1}{\sqrt{2}}\left(\Xi_1+\Xi_2 P_{ep}\right)M_p  & \frac{1}{\sqrt{2}}\left(\Xi_1-\Xi_2 P_{ep}\right)M_p 
\end{bmatrix},
\end{align*}   
where $M_p=\left( M^\top M\right)^{-1}M^\top$, $P_{ep}=M^\top P_e M$ and $M$ is spanning the columns of $P_e$, defined by\footnote{$0$ has to be understood as the zero matrix of proper dimensions.}
\begin{equation}\label{def:Pe}
P_e=\begin{bmatrix}
 P_1 & 0 &G_1 & 0\\
0 & 0 &0 & g_s\\
G_1^\top & 0 &0 & 0\\
0 & g_s  & 0  & 0\\
\end{bmatrix}
\end{equation}
and where $\Xi_1$ and $\Xi_2$ satisfy $\Xi_2^\top\Xi_1+\Xi_1^\top\Xi_2=0$ and $\Xi_2^\top\Xi_2+\Xi_1^\top \Xi_1=I$.\rule{0.5em}{0.5em}
\end{definition}

\begin{remark}
Definition \ref{definition_BC_IPHS} proposes an extension of the boundary port variables defined in \cite{gorrec:2005} for (reversible) BC-PHS to deal with non-reversible systems. Indeed, the vector of boundary port-variables (\ref{boundary_variables_IPHS}) is composed by the vector of intensive variables and the vector of modulated driving forces. On the other hand the matrix (\ref{def:Pe}), which is defined by the reversible structure matrix $P_1$ and the irreversible structure matrices $G_1$ and $g_s$, define the admissible parametrization to obtain the boundary inputs and outputs. Notice that when no irreversible phenomena is present, and thus the entropy coordinate is not considered, Definition \ref{definition_IPHS} and \ref{definition_BC_IPHS} define a BC-PHS \citep{gorrec:2005} (see also the \ref{Appendix}).
\end{remark}

\begin{example}
Recalling the 1D fluid model in Section \ref{section_fluids}, its BC-IPHS formulation is given by $P_0=0, G_0=0, g_s=0$, $P_{1}= 
\begin{bmatrix}
0 & 1  \\
1 & 0 \\
\end{bmatrix}$ and $G_{1} =
\begin{bmatrix}
 0 \\
 1 \\
\end{bmatrix}$
with  $x=\begin{bmatrix}\phi \\ \upsilon\end{bmatrix}$ and $R_{11} = \gamma_{1}\{S | G_1(:,1)\frac{\partial }{\partial z}| H \}$ with $\gamma_{1}=\tfrac{\hat{\mu}}{T}>0$. In this case $n=2$, $m=1$ and 
\[
P_{e}=\begin{bmatrix}
0 &  1 &0& 0 &0\\
 1& 0 &0& 1 &0 \\
0& 0 &0& 0 &0 \\
0& 1 &0& 0 &0 \\
0& 0 &0& 0 &0 
\end{bmatrix}
\]
which gives $M=\begin{bmatrix} \frac{1}{2}&0&0&\frac{1}{2}&0\\0&1&0&0&0 \end{bmatrix}^\top$, $M_P=\begin{bmatrix} 0&1&0&0&0\\ 1&0&0&1&0\\ \end{bmatrix}$ and $P_{ep}=\begin{bmatrix}0 & 1\\ 1&0\end{bmatrix}$.
Choosing the parametrization
\[\Xi_1=\frac{1}{\sqrt{2}}\begin{bmatrix}
1 &  0 \\
1& 0 
\end{bmatrix}, \quad \Xi_2=\frac{1}{\sqrt{2}}\begin{bmatrix}
0 &  1 \\
0& -1 
\end{bmatrix}
\]
define the following boundary inputs and outputs
\begin{align*}
v(t)& =\begin{bmatrix}
-p(t,b)+\frac{\hat{\mu}}{T}\frac{\partial{\upsilon}}{\partial z}(t,b)\\
p(t,a)-\frac{\hat{\mu}}{T}\frac{\partial{\upsilon}}{\partial z}(t,a)\\
\end{bmatrix}, & 
y(t)&=\begin{bmatrix}
\upsilon(t,b) \\
\upsilon(t,a)\\ 
\end{bmatrix}.
\end{align*}
As for the reversible case, the boundary inputs and outputs correspond, respectively, to the pressure and the velocities evaluated at points $a$ and $b$. If there is no dissipation in the system, then the boundary inputs and outputs are exactly the same as for the reversible case. It is direct to verify that the internal energy balance is given by $\dot{H}(t)=y(t)^\top v(t)$. Notice that the chosen parametrization is not unique and that a different choice will lead to different boundary inputs/outputs.
\end{example}


\begin{example}
Consider the heat conduction with heat diffusion over a 1D spatial domain. The conserved quantity is the density of internal energy and the state reduces to a unique variable. Choose the internal energy density $u = u(s)$ as thermodynamic potential function (and $U(s)=\int_a^b u dz $), in this case Gibbs relation defines the temperature as intensive variable conjugated to the extensive variable, the entropy by $T=\frac{du}{ds}(s)$. This leads to write the following entropy balance equation \citep{Geoplex09}
\begin{equation*}
\frac{\partial s}{\partial t} = -\frac{1}{T} \frac{\partial}{\partial z}\left( -\lambda\frac{\partial T}{\partial z} \right)\\
\end{equation*}
where $\lambda$ denotes the heat conduction coefficient and $-\lambda\frac{\partial T}{\partial z}=f_Q$ corresponds to the heat flux. Alternatively the heat conduction can be written in terms of the entropy flux $f_S=\frac{1}{T}f_Q = -\frac{\lambda}{T}\frac{\partial T}{\partial z}$,
\begin{equation}\label{heat_eq_ds}
\frac{\partial s}{\partial t} =\frac{\partial}{\partial z}\left(\frac{\lambda}{T}\frac{\partial T}{\partial z} \right) + \frac{\lambda}{T^2}\left(\frac{\partial T}{\partial z}\right)^2 \\
\end{equation}
from where the entropy production $\sigma_s =\frac{\lambda}{T^2}\left(\frac{\partial T}{\partial z}\right)^2$ is directly identified. Recalling that $\frac{\delta U}{\delta s}=T$, the IPHS formulation of the heat conduction is directly obtained from (\ref{heat_eq_ds}),
\begin{equation*}
\frac{\partial s}{\partial t} = \frac{\lambda}{T^2}\frac{\partial T}{\partial z}\frac{\partial}{\partial z} \left(\frac{\delta U}{\delta s}\right) + \frac{\partial}{\partial z} \left(\frac{\lambda}{T^2}\frac{\partial T}{\partial z}\left(\frac{\delta U}{\delta s}\right)\right)
\end{equation*}
which is equivalent to \eqref{IPHS-BCS} where $P_0=0$, $P_1=0$, $G_0=0$, $G_1=0$, $g_s=1$ and $r_s=\gamma_s \{S | U \}$ with $\gamma_s=\frac{\lambda}{T^2}$ and $\{S | U \}=\frac{\partial T}{\partial z}$. In this case $P_e=\frac{1}{2}\begin{bmatrix}0&1\\ 1& 0\end{bmatrix}$, $n=1$ and $m=1$.  Choosing $\Xi_1=\frac{1}{\sqrt{2}}\begin{bmatrix}
1 &  0 \\
1& 0 
\end{bmatrix}$, $\Xi_2=\frac{1}{\sqrt{2}}\begin{bmatrix}
0 &  1 \\
0& -1 
\end{bmatrix}$ the boundary inputs and outputs of the system are
\begin{align*}
v(t)&=
\begin{bmatrix}
\left( \frac{\lambda_s}{T} \frac{\partial T}{\partial z}\right)(t,b) \\ 
-\left(\frac{\lambda_s}{T} \frac{\partial T}{\partial z}\right)(t,a)
\end{bmatrix}, & 
y(t)&=
\begin{bmatrix}
T(t,b)\\ 
T(t,a)
\end{bmatrix},
\end{align*}
respectively the entropy flux and the temperature at each boundary.
\end{example}

\section{Energy and entropy balance equations}\label{section_energy_entropy}

BC-IPHS encode the first and second principle of Thermodynamics, i.e., the conservation of the total energy and the irreversible production of entropy as stated in the following lemmas.

\begin{lemma}\label{lemma_Conservation_of_internalenergy} (Conservation of energy)
The total energy balance is 
\begin{equation*}
\dot{H} = y(t)^\top v(t)
\end{equation*}
which leads, when the input is set to zero, to $\dot{H} = 0$ in accordance with the first principle of Thermodynamics.
\end{lemma}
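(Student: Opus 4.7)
The plan is to compute $\dot H$ directly from the chain rule and show that all volumetric contributions either vanish by skew-symmetry or assemble into a total $z$-derivative that evaluates to the boundary bilinear form $y^\top v$. Write $e_x=\tfrac{\delta H}{\delta x}$, $e_s=\tfrac{\delta H}{\delta s}$, so that
\begin{equation*}
\dot H=\int_a^b\!\Big(e_x^\top\tfrac{\partial x}{\partial t}+e_s\,\tfrac{\partial s}{\partial t}\Big)\,dz.
\end{equation*}
Substituting (\ref{IPHS-BCS}), the zero-order contribution is of the form $[e_x^\top,e_s]J_0[e_x;e_s]$ with
\begin{equation*}
J_0=\begin{bmatrix}P_0 & G_0\mathbf{R_0}\\ -\mathbf{R_0}^\top G_0^\top & 0\end{bmatrix},
\end{equation*}
which is skew-symmetric by construction, so this part vanishes pointwise and contributes nothing to $\dot H$.

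Next I would focus on the first-order part. Using that $P_1$ is symmetric one has $e_x^\top P_1\tfrac{\partial e_x}{\partial z}=\tfrac12\tfrac{\partial}{\partial z}(e_x^\top P_1 e_x)$. The two cross terms between $x$ and $s$ combine by the Leibniz rule as
\begin{equation*}
e_x^\top\tfrac{\partial}{\partial z}\!\big(G_1\mathbf{R_1}\,e_s\big)+e_s\mathbf{R_1}^\top G_1^\top\tfrac{\partial e_x}{\partial z}=\tfrac{\partial}{\partial z}\!\big(e_x^\top G_1\mathbf{R_1}\,e_s\big),
\end{equation*}
and similarly the two purely entropic terms give $e_s\big(g_s\mathbf{r_s}\tfrac{\partial e_s}{\partial z}+\tfrac{\partial}{\partial z}(g_s\mathbf{r_s}e_s)\big)=\tfrac{\partial}{\partial z}(g_s\mathbf{r_s}e_s^2)$. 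Collecting these three pieces, the entire integrand of $\dot H$ equals $\tfrac12\tfrac{\partial}{\partial z}\big(e^\top P_e\,e\big)$, where $e(t,z)$ is exactly the extended boundary port vector (\ref{boundary_variables_IPHS}) and $P_e$ is the symmetric matrix in (\ref{def:Pe}); this is the key rewriting, since it matches the quadratic form in the efforts to the boundary structure matrix $P_e$. The fundamental theorem of calculus then yields
\begin{equation*}
\dot H=\tfrac12\,e(t,b)^\top P_e\,e(t,b)-\tfrac12\,e(t,a)^\top P_e\,e(t,a).
\end{equation*}

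Finally I would reduce this boundary bilinear form to $y^\top v$ exactly as in the reversible BC-PHS case. Writing $E=[e(t,b);e(t,a)]$ and using $P_e=M P_{ep} M^\top$ with $M$ spanning the columns of $P_e$ (so that $M_p M=I$ on that range), the above boundary expression becomes $\tfrac12 E^\top\,\mathrm{diag}(M P_{ep} M^\top,-M P_{ep} M^\top)\,E$. A direct computation of $W_C^\top W_B+W_B^\top W_C$ using the two algebraic constraints $\Xi_2^\top\Xi_1+\Xi_1^\top\Xi_2=0$ and $\Xi_2^\top\Xi_2+\Xi_1^\top\Xi_1=I$ and the symmetry of $P_{ep}$ shows that $W_B$ and $W_C$ are conjugated with respect to this block-diagonal form, hence $y^\top v=\tfrac12 E^\top\,\mathrm{diag}(P_e,-P_e)\,E$, which coincides with the expression just obtained. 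This gives $\dot H=y^\top v$, and in particular $\dot H=0$ when $v\equiv 0$. The main obstacle in carrying this out in full is the bookkeeping of the last step, matching the $\Xi_1,\Xi_2$ parametrization against the quadratic form in $P_e$; the divergence identity itself is a routine Leibniz computation once the block structure is read off correctly.
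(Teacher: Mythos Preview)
Your proof is correct and follows essentially the same route as the paper: both use the chain rule, discard the zero-order part by skew-symmetry, rewrite each first-order term as a total $z$-derivative via Leibniz/integration by parts, assemble these into the boundary quadratic form $\tfrac12[e^\top P_e\,e]_a^b$, and then invoke the $(\Xi_1,\Xi_2,M_p,P_{ep})$ parametrization to identify it with $y^\top v$. The only cosmetic difference is that you carry out the last algebraic matching explicitly, whereas the paper appeals directly to the construction in \cite{gorrec:2005,legorrecCCA2006}.
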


\begin{proof}
The variation of the total energy with respect to time is
\begin{align*}
\dot{H} & = \int_a^b \frac{\partial h}{\partial t} dz = \int_a^b \left(\frac{\delta H}{\delta x}^\top\frac{\partial x}{\partial t}+\frac{\delta H}{\delta s}^\top\frac{\partial s}{\partial t}\right) dz \\
&=\int_a^b  \begin{bmatrix}\frac{\delta H}{\delta x}(t,z)^\top & \frac{\delta H}{\delta s}(t,z) \end{bmatrix} {\mathcal{J}}_e\begin{bmatrix}\frac{\delta H}{\delta x}(t,z)\\ \frac{\delta H}{\delta s}(t,z) \end{bmatrix} dz
\end{align*}
with
\[
{\mathcal{J}}_e=\begin{bmatrix}P_1 \frac{\partial (.)}{\partial z}  & \frac{\partial \left(G_1 \mathbf{R_1(x)} .\right) }{\partial z}\\ \mathbf{R_1(x)}^TG_1^T \frac{\partial\left(.\right) }{\partial z}&g_s \mathbf{r_s(x)} \frac{\partial \left(.\right)}{\partial z}+\frac{\partial \left(g_s\mathbf{ r_s(x)}.\right)}{\partial z}\end{bmatrix}
\]
where we have used the skew symmetry of the matrix of zero order operators
\[
\begin{bmatrix}
P_0 & G_0\mathbf{R_0(x)}\\ 
-\mathbf{R_0(x)}^TG_0^T&0 
\end{bmatrix}.\]
Noticing that
\[
\int_a^b \frac{\delta H}{\delta x}^\top P_1 \frac{\partial }{\partial z}\left( \frac{\delta H}{\delta x}\right) dz = \frac{1}{2} \left[ \frac{\delta H}{\delta x}^\top P_1 \frac{\delta H}{\delta x}\right]_a^b
\]
that
\begin{multline*}
\int_a^b \left( 
\frac{\delta H}{\delta s} \mathbf{R_1(x)}^TG_1^T \frac{\partial}{\partial z} \left(\frac{\delta H}{\delta x}\right)+\frac{\delta H}{\delta x}^\top  \frac{\partial}{\partial z}\left(G_1\mathbf{R_1(x)} \frac{\delta H}{\delta s}
\right) \right)dz \\
= \left[ \frac{\delta H}{\delta s} \mathbf{R_1(x)}^\top G_1^\top \frac{\delta H}{\delta x}  \right]_a^b
\end{multline*}
and that
\begin{multline*}
\int_a^b \left( 
 \frac{\delta H}{\delta s} g_s \mathbf{r_s(x)} \frac{\partial }{\partial z}\left( \frac{\delta H}{\delta s}\right)+ \frac{\delta H}{\delta s} \frac{\partial }{\partial z}\left(g_s \mathbf{ r_s(x)} \frac{\delta H}{\delta s}\right)
\right)dz \\
= \left[ \frac{\delta H}{\delta s}g_s \mathbf{ r_s(x)} \left(\frac{\delta H}{\delta s}\right) \right]_a^b
\end{multline*}
we have
\begin{align*}
\dot{H} & = \left[  \begin{bmatrix}\frac{\delta H}{\delta x}\\ \frac{\delta H}{\delta s}\\ \mathbf{R_1(x)}G_1^T \frac{\delta H}{\delta s}\\ g_s\mathbf{r_s(x)} \frac{\delta H}{\delta s}\end{bmatrix}^\top P_e   \begin{bmatrix}\frac{\delta H}{\delta x} \\ \frac{\delta H}{\delta s} \\ \mathbf{R_1(x)}G_1 \frac{\delta H}{\delta s}\\ g_s \mathbf{r_s(x)} \frac{\delta H}{\delta s}\end{bmatrix}\right]_a^b
\end{align*}
with $P_e$ defined in (\ref{def:Pe}). Using the parametrization proposed in \citep{gorrec:2005,legorrecCCA2006} ($P_e$ is potentially not full rank), it is possible to write
\[
\begin{bmatrix}
u(t) \\
y(t)
\end{bmatrix}=\frac{1}{\sqrt{2}}\begin{bmatrix}
\Xi_1 & \Xi_2 \\
\Xi_2 & \Xi_1
\end{bmatrix}
\begin{bmatrix}
P_{1p}M_p & -P_{1p}M_p \\
M_p&M_p
\end{bmatrix}
\begin{bmatrix}
e(t,b) \\
e(t,a)
\end{bmatrix} 
\]
with $\Xi_i$, $M_P$ and $P_{1p}$ defined in Definition \ref{definition_BC_IPHS}, from where it is obtained that $\dot{H}=y(t)^\top u(t)$. 
\end{proof}

\begin{lemma}\label{lemma_Irreversible_entropy_production} (Irreversible entropy production)
The total entropy balance is given by  
\begin{equation*}
\dot{S} =  \int_a^b \sigma_t dz -y_{S}^\top v_{s}
\end{equation*}
where $y_s$ and $v_s$ are the entropy conjugated input/output and $\sigma_t$ is the total internal entropy production. This leads, when the input is set to zero, to $\dot{S} = \int_a^b \sigma_t dz \geq 0 $ in accordance with the second principle of Thermodynamics.
\end{lemma}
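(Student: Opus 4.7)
The plan is to differentiate $S(t)=\int_a^b s(t,z)\,dz$ in time using the second row of~(\ref{IPHS-BCS}), isolate the divergence (flux) piece from the pointwise (source) piece, and then read off the entropy production from the latter and the boundary pairing from the former. First I would substitute the bottom row of (\ref{IPHS-BCS}) into $\dot S=\int_a^b \partial_t s\,dz$; only the term $\partial_z(g_s \mathbf{r_s(x)}\,\delta H/\delta s)$ is in divergence form, so the fundamental theorem of calculus yields
\[
\dot S = \int_a^b \sigma_t\,dz + \left[g_s \mathbf{r_s(x)}\tfrac{\delta H}{\delta s}\right]_a^b,
\]
where $\sigma_t$ collects the three remaining pointwise contributions $-\mathbf{R_0}^\top G_0^\top \tfrac{\delta H}{\delta x}$, $\mathbf{R_1}^\top G_1^\top \partial_z \tfrac{\delta H}{\delta x}$ and $g_s \mathbf{r_s}\partial_z\tfrac{\delta H}{\delta s}$.

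Next I would expand the pseudo-brackets of Definition~\ref{definition_IPHS} to check $\sigma_t\ge 0$. Using $\delta S/\delta s=1$ and $\delta S/\delta x=0$, and the fact that $G_0,G_1$ are constant, the brackets in~(\ref{pseudo-bracket}) reduce to $\{S|G_0(:,i)|H\}=-G_0(:,i)^\top \tfrac{\delta H}{\delta x}$, $\{S|G_1(:,i)\partial_z|H\}=G_1(:,i)^\top \partial_z\tfrac{\delta H}{\delta x}$, and $\{S|H\}=\partial_z\tfrac{\delta H}{\delta s}$. Substituting these back into $R_{0,i}$, $R_{1,i}$ and $r_s$ turns each contribution to $\sigma_t$ into a nonnegative multiple of a perfect square:
\begin{align*}
\sigma_t =\ & \sum_i \gamma_{0,i}\bigl(G_0(:,i)^\top \tfrac{\delta H}{\delta x}\bigr)^2 \\
& + \sum_i \gamma_{1,i}\bigl(G_1(:,i)^\top \partial_z \tfrac{\delta H}{\delta x}\bigr)^2 \\
& + g_s\gamma_s\bigl(\partial_z \tfrac{\delta H}{\delta s}\bigr)^2 \ \ge\ 0,
\end{align*}
under the standing sign assumptions $\gamma_{0,i},\gamma_{1,i},\gamma_s\ge 0$ together with $g_s\ge 0$, which identifies $\sigma_t$ as the total local irreversible entropy production rate.

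The remaining boundary term is then rewritten as $-y_s^\top v_s$ using the boundary port variables~(\ref{boundary_variables_IPHS}) and the parametrization $W_B, W_C$ of Definition~\ref{definition_BC_IPHS}. The effort vector $e(t,z)$ already contains $\tfrac{\delta H}{\delta s}$ and $\mathbf{r_s}\tfrac{\delta H}{\delta s}$ among its components, so $[g_s \mathbf{r_s}\tfrac{\delta H}{\delta s}]_a^b$ is a specific bilinear form in $e(t,b)$ and $e(t,a)$. Selecting from $W_B$ and $W_C$ the rows that couple those entropy-side components singles out the entropy-conjugated subpair $(v_s,y_s)$ of $(v,y)$, whose product, mirroring the identity $\dot H=y^\top v$ of Lemma~\ref{lemma_Conservation_of_internalenergy}, reproduces the claimed $-y_s^\top v_s$. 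Setting $v\equiv 0$ then forces $v_s=0$, leaving $\dot S=\int_a^b\sigma_t\,dz\ge 0$ as required.

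The main obstacle, just as in Lemma~\ref{lemma_Conservation_of_internalenergy}, is this last identification: tracing the reduction of the boundary scalar through the parametrization ($M$, $M_p$, $P_{ep}$, $\Xi_1$, $\Xi_2$) of Definition~\ref{definition_BC_IPHS} down to the entropy-conjugated port pair is bookkeeping-heavy, and the sign convention must be checked carefully so that $-y_s^\top v_s$ rather than $+y_s^\top v_s$ appears. The preceding positivity step is by contrast purely algebraic once the pseudo-brackets have been expanded.
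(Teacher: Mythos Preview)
Your argument matches the paper's proof: substitute the $s$-row of~(\ref{IPHS-BCS}) into $\dot S$, separate the divergence term, and verify that the three remaining pointwise contributions are nonnegative sums of squares via the bracket identities you wrote down. One clarification on the boundary step you flag as ``the main obstacle'': the paper does \emph{not} route the boundary term through the $(W_B,W_C,M,M_p,P_{ep},\Xi_1,\Xi_2)$ machinery at all. It simply recognizes $g_s\mathbf{r_s(x)}\tfrac{\delta H}{\delta s}=-f_s$ as (minus) the entropy flux and \emph{defines} the entropy-conjugated supply rate by $y_s^\top v_s=f_s(b,t)-f_s(a,t)$, so that the identity $\dot S=\int_a^b\sigma_t\,dz-y_s^\top v_s$ follows immediately from the fundamental theorem of calculus. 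In other words, $(v_s,y_s)$ is not obtained by restricting $(v,y)$ from Definition~\ref{definition_BC_IPHS}; it is introduced directly from the entropy flux, and the ``input set to zero'' in the lemma refers to this $v_s$. Your concern about bookkeeping through the full port parametrization is therefore unnecessary here.
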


\begin{proof}
Let's consider the total entropy balance 
\begin{align*}
\dot{S} & =\int_a^b \frac{\partial s}{\partial t}dz  \\
& = \int_a^b \left(\mathbf{R_0(x)}^\top G_0^\top  \frac{\delta H}{\delta x}+\mathbf{R_1(x)}^\top G_1^\top  \frac{\partial }{\partial z}\frac{\delta H}{\delta x}+\right.\\
& \left. g_s \mathbf{r_s(x)} \frac{\partial }{\partial z}\frac{\delta H}{\delta s}+\frac{\partial }{\partial z}\left(g_s\mathbf{ r_s(x)}\frac{\delta H}{\delta x}\right)\right) dz
\end{align*}
The first three terms define the internal entropy production related to the operators of order zero and one 
\begin{align*}
\mathbf{R_0(x)}^\top G_0^\top  \frac{\delta H}{\delta x} & = \sum_i^m \left( R_{0,i} (x) G_0(:,i)^\top \frac{\delta H}{\delta x} \right)\\
&= \sum_i^m  \gamma_{0,i} \left\{S|G_0(:,i)|H\right\}^2= \sum_i^m \sigma_{0i}\geq 0,\\
\mathbf{R_1(x)}^\top G_1^\top \frac{\partial }{\partial z}\frac{\delta H}{\delta x}& = \sum_i^m \left( R_{1,i} (x) G_1(:,i)^\top \frac{\partial }{\partial z}\frac{\delta H}{\delta x} \right)\\
& = \sum_i^m  \gamma_i \left\{S|G_1(:,i)\frac{\partial }{\partial z}|H\right\}^2 =  \sum_i^m \sigma_{1i}\geq 0,\\
g_s \mathbf{r_s(x)} \frac{\partial }{\partial z}\frac{\delta H}{\delta s}&= \gamma_{s} \left\{S|H\right\}^2 = \sigma_{s}\geq 0,
\end{align*}
%
%
%
where $\sigma_{0i}$ and $\sigma_{1i}$ are, respectively, the zero and first order internal entropy productions due to the $i$-th irreversible thermodynamic processes, and $\sigma_{s}$ is the internal entropy production due to entropy (heat) flux. Since the total internal entropy production is the sum of the internal entropy production of all irreversible processes $\sigma_t= \sum_i^m \left( \sigma_{0i}+\sigma_{1i}+\sigma_{s}\right)$ we have
\begin{align*}
\dot{S} & = \int_a^b \sigma_t dz + \int_a^b  \frac{\partial }{\partial z}\left(g_s\mathbf{ r_s(x)}\frac{\delta H}{\delta x}\right)dz \\
& =  \int_a^b \sigma_t dz + \left(g_s\mathbf{ r_s(x)}\frac{\delta H}{\delta x}(b,t) - g_s\mathbf{ r_s(x)}\frac{\delta H}{\delta x}(a,t)\right) \\
& =  \int_a^b \sigma_t dz -\left( f_s(b,t) - f_s(a,t)\right).
\end{align*}
from where we have that the supply rate $y_S^\top v_{s}=\left( f_s(b,t) - f_s(a,t)\right)$, representing the entropy flux at the boundaries. Hence, the total entropy variation is equal to the internal entropy production minus what is flowing in/out through the boundaries. 
\end{proof}

\section{Example: the diffusion-reaction process}\label{section_diffusion_reaction}
Diffusion-reaction processes are systems in which the changes in the mole number per unit volume are due to transport of particles, through processes such as diffusion and convection, and due to chemical reactions. This is the case for instance for tubular reactors \citep{Horn_1972, Feinberg_1987,Ar:89,Kondepudi_98,Sa:06}. Consider a diffusion-reaction process without convection involving $n$ species and $j$ chemical reactions, described by the following set of PDEs
\begin{equation}\label{diffusion_general}
\begin{split}
\frac{\partial c_i}{\partial t} & = -\frac{\partial f_{ci}}{\partial z} + \sum_{k=1}^j \bar{\nu}_{ki} r_{k}, \qquad  i=1,\ldots,n \\
\frac{\partial s}{\partial t} & = -\frac{\partial f_s}{\partial z} +\sum_{k=1}^{n} \sigma_{c_k} + \sum_{k=1}^j \sigma_{r_k} + \sigma_{s}.
\end{split}
\end{equation}
where $c_i$ is the  molar concentration per unit volume of the $i$-th specie, $s$ is the entropy density, $f_i=-\frac{L_i}{T}\frac{\partial \mu_i}{\partial z}$ corresponds to the molar flux of the $i$-th specie, $f_s=-\frac{\lambda}{T}\frac{\partial T}{\partial z}$ to the entropy flux, $L_i$ is the diffusion coefficients of the $i$-th specie, $\lambda$ is heat conduction coefficient, $\nu_{ki}$ is the stoichiometric coefficient of the reactant $i$ in the $k$-th reaction and $r_k$ is the reaction rate of the $k$-th reaction. The thermodynamic driving force of the $k$-th chemical reaction is the chemical affinity of the $k$-th reaction $\mathcal{A}_k=-\sum_{i=1}^{n} \bar{\nu}_{ki}\mu_{i}$, where $\mu_i$ is the chemical potential of the $i$-th specie. The internal entropy production of the process is due to irreversible diffusion, heat conduction and to the chemical reactions. In (\ref{diffusion_general}) the internal entropy production terms are respectively, $\sigma_{c_i} =- \frac{1}{T}f_{ci}\frac{\partial \mu_i}{\partial z} = \frac{L_i}{T^2}\left(\frac{\partial \mu_i}{\partial z}\right)^2$ for diffusion of the $i$-th species, $ \sigma_{r_k}= \frac{1}{T}r_{k} \mathcal{A}_{k}= \sum_{i=1}^{n-1} \bar{\nu}_{ki} r_{k} \mu_{i}$ for the $k$-th chemical reaction and $\sigma_s =- \frac{1}{T}f_s\frac{\partial T}{\partial z}= \frac{\lambda}{T^2}\left(\frac{\partial T}{\partial z}\right)^2$ for heat conduction.


\subsection{The IPHS model}

\begin{prop}\label{Proposition_diffusion_reaction_IPHS}
Consider the diffusion-reaction process with $x=[c_{1}, \ldots, c_{n}]^\top \in \mathbb{R}^n$. Then (\ref{diffusion_general}) can be written as the infinite dimensional IPHS (Definition \ref{definition_IPHS})
\begin{multline}\label{IPHS_diffusion_infinite}
\frac{\partial }{\partial t} 
\begin{bmatrix} 
x(t,z)\\ 
s(t,z)
\end{bmatrix}=
\begin{bmatrix}
0 & G_0\mathbf{R_0(x)}\\ 
-\mathbf{R_0(x)^\top }G_0^\top & 0
\end{bmatrix}
\begin{bmatrix}
\frac{\delta H}{\delta x}(t,z)\\ 
\frac{\delta H}{\delta s}(t,z) 
\end{bmatrix} + \\ 
\begin{bmatrix}
0  & \frac{\partial \left(G_1 \mathbf{R_1(x)} \cdot \right) }{\partial z}\\
\mathbf{R_1(x)}^\top G_1^\top  \frac{\partial\left(\cdot \right) }{\partial z} & g_s \mathbf{r_s(x)} \frac{\partial \left( \cdot \right)}{\partial z} + \frac{\partial \left(g_s \mathbf{r_s(x)} \cdot \right)}{\partial z}
\end{bmatrix}
\begin{bmatrix}
\frac{\delta H}{\delta x}(t,z)\\ 
\frac{\delta H}{\delta s}(t,z) 
\end{bmatrix}
\end{multline}
The modulating function of the mass diffusion of the $i$-th species is
\[
R_{1i} = \gamma_{1i}\{S | G_1(:,1) \tfrac{\partial}{\partial z} | U\} =  \frac{1}{T}\left(\frac{L_i}{T}\right)\frac{\partial \mu_i}{\partial z}
\]
with $\gamma_{ci} = \frac{1}{T}\left(\frac{L_i}{T}\right)>0$, $\{S | G_1(:,i) \tfrac{\partial}{\partial z} | U\}=\frac{\partial \mu_i}{\partial z}$ and the matrices of the operators of order one $G_1=I_n$. The modulating function of the heat diffusion is
\[
r_{s} = \gamma_{s}\{S  | U\}= \frac{1}{T}\left(\frac{\lambda}{T}\right)\frac{\partial T}{\partial z}
\] 
with $\gamma_{s} =  \frac{1}{T}\left(\frac{\lambda}{T}\right)>0$, $\{S | U\}=\frac{\partial T}{\partial z}$ and $g_s=1$. The modulating function of the  $j$-th chemical reaction is
\begin{equation*}
R_{0j}=\gamma_{0j}\{s | G_0(:,j) | U \} = \frac{1}{T}\left(\frac{r_j}{\mathcal{A}_j}\right)\mathcal{A}_j
\end{equation*}
with $\gamma_{r_j} = \frac{1}{T}\left(\frac{r_j}{\mathcal{A}_j}\right)>0$, $\{S | G_{0j} | U \}=\mathcal{A}_j$ and the matrix of the operator of order zero $G_0=I_n$. 

Furthermore, consider the following set of boundary inputs and outputs, respectively,
\begin{align}\label{IPHS_inputs_outputs}
v&=
\begin{bmatrix}
\mathbf{f}(t,a) \\
\mathbf{f}(t,b)
\end{bmatrix}, &
y&=
\begin{bmatrix}
-\frac{\delta H}{\delta x}(t,b)\\ -\frac{\delta H}{\delta s}(t,b)\\
\frac{\delta H}{\delta x}(t,a)\\ \frac{\delta H}{\delta s}(t,a)
\end{bmatrix},
\end{align}
with $\mathbf{f}=[f_{c1}, \ldots, f_{cn}, f_s]^\top$ the vector of fluxes, then (\ref{IPHS_diffusion_infinite}) with (\ref{IPHS_inputs_outputs}) is a BC-IPHS (Definition \ref{definition_BC_IPHS}). 
\end{prop}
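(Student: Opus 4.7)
The plan is to split the proof into two parts that mirror the two claims in the proposition: first, showing that the diffusion-reaction PDE (\ref{diffusion_general}) can be cast exactly in the form (\ref{IPHS_diffusion_infinite}) with the stated structure matrices and modulating functions; and second, showing that the boundary inputs and outputs in (\ref{IPHS_inputs_outputs}) arise from a valid choice of the parametrization $(\Xi_1,\Xi_2)$ in Definition \ref{definition_BC_IPHS}.

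For the first part, I would choose the internal energy functional $H = U = \int_a^b u(x,s)\, dz$ so that Gibbs' relation identifies the co-energy variables as $\frac{\delta H}{\delta x_i} = \mu_i$ and $\frac{\delta H}{\delta s} = T$. Substituting the proposed expressions for $R_{0j}$, $R_{1i}$ and $r_s$, and recalling the pseudo-brackets (\ref{pseudo-bracket}) together with $\frac{\delta S}{\delta s}=1$, $\frac{\delta S}{\delta x}=0$, I would verify that $\{S|G_0(:,j)|U\}=\mathcal{A}_j$, $\{S|G_1(:,i)\tfrac{\partial}{\partial z}|U\}=\frac{\partial\mu_i}{\partial z}$ and $\{S|U\}=\frac{\partial T}{\partial z}$. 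Expanding the right-hand side of (\ref{IPHS_diffusion_infinite}) row-by-row then gives, for the concentration equations, the reaction source $\sum_k \bar\nu_{ki} r_k$ from the zero-order block and the diffusion-flux divergence $-\partial f_{ci}/\partial z$ from the first-order block. For the entropy equation, the sum of the three quadratic terms $\mathbf{R_0}^\top G_0^\top \frac{\delta H}{\delta x}$, $\mathbf{R_1}^\top G_1^\top \partial_z\frac{\delta H}{\delta x}$ and $g_s r_s \partial_z\frac{\delta H}{\delta s}$ reproduces the internal entropy production terms $\sigma_{r_k}$, $\sigma_{c_i}$ and $\sigma_s$ of (\ref{diffusion_general}), while the remaining divergence term $\partial_z(g_s r_s \frac{\delta H}{\delta s})$ yields $-\partial f_s/\partial z$. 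Matching signs and indices term-by-term completes this part.

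For the second part, I would compute the boundary port vector from (\ref{boundary_variables_IPHS}): since $G_1=I_n$, $g_s=1$ and the co-energy variables are $\mu$ and $T$, the product $\mathbf{R}(x)\frac{\delta H}{\delta s}$ yields precisely the fluxes $-f_{c}$ and $-f_s$ up to sign. Hence $e(t,z)$ collects the intensive variables $(\mu,T)$ together with the fluxes $(f_{c},f_s)$ at each boundary. Next, I would form $P_e$ from (\ref{def:Pe}) with $P_1=0$, $G_1=I_n$, $g_s=1$, observe that $P_e$ has a simple block-permutation structure, extract a matrix $M$ whose columns span the image of $P_e$, and compute the pseudo-inverse $M_p$ and the reduced matrix $P_{ep}$. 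Finally, I would select $\Xi_1,\Xi_2$ satisfying the orthogonality relations $\Xi_2^\top\Xi_1+\Xi_1^\top\Xi_2=0$, $\Xi_2^\top\Xi_2+\Xi_1^\top\Xi_1=I$ so that $W_B$ and $W_C$ pick out the boundary fluxes at $(a,b)$ as inputs and the boundary intensive variables (with appropriate signs) as the conjugated outputs, matching (\ref{IPHS_inputs_outputs}).

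The main obstacle I anticipate is the second part: even though the first part reduces to mechanical term-matching once the co-energy variables are identified, constructing an explicit admissible pair $(\Xi_1,\Xi_2)$ whose associated $W_B,W_C$ produce exactly the flux/intensive partition (\ref{IPHS_inputs_outputs}) requires care because $P_e$ is rank-deficient and of size $2(n+1)\times 2(n+1)$. The natural strategy, in analogy with the isentropic fluid example, is to take $\Xi_1 = \tfrac{1}{\sqrt 2}\bigl[\begin{smallmatrix} I & 0 \\ I & 0\end{smallmatrix}\bigr]$ and $\Xi_2=\tfrac{1}{\sqrt 2}\bigl[\begin{smallmatrix} 0 & I \\ 0 & -I\end{smallmatrix}\bigr]$ suitably partitioned so that after multiplication by $M_p$ and $P_{ep}$ the boundary input picks the flux components and the output picks the intensive components at $(a,b)$. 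Once this explicit parametrization is exhibited and the orthogonality conditions are checked, the BC-IPHS property follows directly from Definition \ref{definition_BC_IPHS}.
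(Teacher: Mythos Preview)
Your plan matches the paper's proof in both structure and method: verify the pseudo-brackets and modulating functions by direct computation in part one, then build $P_e$ and exhibit an explicit $(\Xi_1,\Xi_2)$ in part two. Two points deserve correction, however.

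First, with $P_1=0$, $G_1=I_n$, $g_s=1$ the matrix $P_e$ is \emph{not} rank-deficient: it is a full-rank symmetric block permutation (swapping the first $n{+}1$ block-rows with the last $n{+}1$), so one may take $M=P_e$ and then $M_p=P_{ep}=P_e$. Your anticipated obstacle therefore disappears and the computation is much simpler than you expect.

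Second, the template $\Xi_1=\tfrac{1}{\sqrt2}\left[\begin{smallmatrix}I&0\\I&0\end{smallmatrix}\right]$, $\Xi_2=\tfrac{1}{\sqrt2}\left[\begin{smallmatrix}0&I\\0&-I\end{smallmatrix}\right]$ borrowed from the fluid example will not produce (\ref{IPHS_inputs_outputs}) here, because in that example the input was the intensive variable (pressure) while in the diffusion-reaction case the input is the flux. The correct choice interchanges the roles of the two block columns and adjusts signs; in the $(n{+}1)$-block partition the paper takes $\Xi_1=\tfrac{1}{\sqrt2}\left[\begin{smallmatrix}0&-I\\0&\phantom{-}I\end{smallmatrix}\right]$, $\Xi_2=\tfrac{1}{\sqrt2}\left[\begin{smallmatrix}I&0\\I&0\end{smallmatrix}\right]$. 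With this swap your argument goes through verbatim.

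Finally, the paper remarks that the positivity $\gamma_{r_j}=\tfrac{1}{T}\tfrac{r_j}{\mathcal A_j}>0$ is not algebraically obvious but follows from De~Donder's fundamental inequality; you should cite this rather than leave it implicit.
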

\begin{proof}
It is straightforward to verify that the pseudo-brackets related to mass and heat diffusion correspond to the respective thermodynamic driving forces. The non-linear functions $\gamma_{ci}$ and $\gamma_s$ are positive since the mass diffusion coefficients $L_i$, the thermal conductivity coefficient $\lambda$ and the temperature are always positive. 
For the thermodynamic driving force of the chemical reaction, consider in a first instance only the $j$-th reaction term. The pseudo-bracket for the $j$-th chemical reaction is
\[
\{S | G_{0j} | U \}= 
\begin{bmatrix}
0 \\
\vdots \\
0 \\
1
\end{bmatrix}^\top
\begin{bmatrix}
0 & \ldots & 0 & \bar{\nu}_{j1} \\
0 & \ldots & 0 & \vdots \\
0 & \ldots & 0 & \bar{\nu}_{j{n}} \\
-\bar{\nu}_{j1} & \ldots & -\bar{\nu}_{j{n}} & 0
\end{bmatrix}
\begin{bmatrix}
\mu_1 \\
\vdots \\
\mu_{n-1} \\
T
\end{bmatrix} = 
\mathcal{A}_j.
\]
On the other hand from De Donder's fundamental equation \citep{Prigogine:52} it has been shown in \citep{Ramirez_CES_2013} that $\gamma_{r_j} = \frac{1}{T}\left(\frac{r_j}{\mathcal{A}_j}\right)>0$. The same applies for the other reactions. Regarding the boundary inputs and outputs, we have that
\begin{equation*}
P_e=\begin{bmatrix}
0 & 0 &I_n & 0\\
0 & 0 &0 & 1\\
I_n & 0 &0 & 0\\
0 & 1  & 0  & 0\\
\end{bmatrix}
\end{equation*}
since $P_1=0, G_1=I_n$ and $g_s=1$. This implies $M_p=P_{ep}=M=P_e$, hence choosing the parametrization
\begin{align*}
\Xi_1=
\frac{1}{\sqrt{2}}\begin{bmatrix}
0 & 0 & -I_n & 0 \\
0 & 0 & 0 & -1  \\
0 & 0 & I_n & 0 \\
0 & 0 & 0 & 1 \\
\end{bmatrix}, &&
\Xi_2=
\frac{1}{\sqrt{2}}\begin{bmatrix}
I_n & 0 & 0 & 0 \\
0 & 1 & 0 & 0 \\
I_n & 0 & 0 & 0 \\
0 & 1 & 0 & 0 \\
\end{bmatrix} 
\end{align*}
we obtain the boundary inputs/outputs (\ref{IPHS_inputs_outputs}).
\end{proof}

\subsection{A simple application case}

Consider a simple diffusion-reaction process on $z \in [a,b]$ involving only two species and obeying the following reaction scheme
\begin{equation}\label{example_A-B}
A \stackrel{r}{\longrightarrow} B 
\end{equation}
This simple reaction involves four irreversible thermodynamic processes, related to the mass diffusion of species $A$ and $B$, the heat diffusion and the chemical reaction. The thermodynamic parameters of the diffusion-reaction process are the mass diffusion coefficients $L_A$ and $L_B$, the thermal conductivity coefficient $\lambda$ and the stoichiometric coefficients $\bar{\nu}_A=-1$ and $\bar{\nu}_B=1$. The state vector is in this case $x=[c_A, c_B]^\top$ and $s$, and according to Proposition \ref{Proposition_diffusion_reaction_IPHS} the BC-IPHS formulation of (\ref{example_A-B}) is
\begin{multline*}
\frac{\partial}{\partial t}
\begin{bmatrix}
c_A \\
c_B \\
s
\end{bmatrix} = 
\begin{bmatrix}
0 & 0 & \frac{r}{T^2} \\
0 & 0 & -\frac{r}{T^2}\\
-\frac{r}{T^2}& \frac{r}{T^2} &  0
\end{bmatrix}
\begin{bmatrix}
\mu_A \\
\mu_B \\
T
\end{bmatrix}+ \\
\begin{bmatrix}
0 & 0 & \frac{\partial}{\partial z}\left( \frac{L_A}{T^2}\frac{\partial \mu_A}{\partial z}\left(.\right)\right) \\
0 & 0 & \frac{\partial}{\partial z}\left( \frac{L_B}{T^2}\frac{\partial \mu_A}{\partial z}\left(.\right)\right) \\
\frac{L_A}{T^2}\frac{\partial}{\partial z}\left(.\right) & \frac{L_B}{T^2}\frac{\partial}{\partial z}\left(.\right) &  \frac{\lambda}{T}\left(\frac{\partial T}{\partial z}\right)\frac{\partial \left(.\right)}{\partial z}+\frac{\partial }{\partial z}\left( \frac{\lambda}{T^2}\frac{\partial T}{\partial z} \left(.\right)\right)
\end{bmatrix}
\begin{bmatrix}
\mu_A \\
\mu_B \\
T
\end{bmatrix}
\end{multline*}

%
%
The conjugated inputs and outputs are, respectively,
\begin{equation*}
v=
\begin{bmatrix}
\frac{L_A}{T}\frac{\partial \mu_A}{\partial z}(t,a) \\
\frac{L_B}{T}\frac{\partial \mu_B}{\partial z}(t,a) \\
\frac{\lambda}{T}\frac{\partial T}{\partial z}(t,a) \\
\frac{L_A}{T}\frac{\partial \mu_A}{\partial z}(t,b) \\
\frac{L_B}{T}\frac{\partial \mu_B}{\partial z}(t,b) \\
\frac{\lambda}{T}\frac{\partial T}{\partial z}(t,b) 
\end{bmatrix}
, \qquad
y=\begin{bmatrix}
-\mu_A(t,b)\\
-\mu_B(t,b)\\
-T(t,b)\\
\mu_A(t,a)\\
\mu_B(t,a)\\
T(t,a)
\end{bmatrix}
\end{equation*}
i.e., the incoming and outgoing flows of matter and entropy evaluated at the boundaries and the intensive variables evaluated at the boundaries. From Lemma \ref{lemma_Conservation_of_internalenergy} the energy balance is given by
\begin{align*}
\dot{U} & =y^\top v \\
& = \left( \frac{L_A}{T}\frac{\partial \mu_A}{\partial z}\mu_A(b) - \frac{L_A}{T}\frac{\partial \mu_A}{\partial z}\mu_A (a) \right) \\
& + \left( \frac{L_B}{T}\frac{\partial \mu_B}{\partial z}\mu_B(b) - \frac{L_B}{T}\frac{\partial \mu_B}{\partial z}\mu_B (a) \right) \\ 
& + \left( \frac{\lambda}{T}\frac{\partial T}{\partial z}T(b) - \frac{\lambda}{T}\frac{\partial T}{\partial z}T(a) \right),
\end{align*}
while by Lemma \ref{lemma_Irreversible_entropy_production} the entropy balance is
\begin{multline*}
\dot{S}=\\
\int_a^b\frac{\lambda}{T^2}\left(\frac{\partial T}{\partial z}\right)^2 + \frac{L_A}{T^2}\left(\frac{\partial \mu_A}{\partial z}\right)^2 + \frac{L_B}{T^2}\left(\frac{\partial \mu_B}{\partial z}\right)^2 + \frac{r}{T^2}\mathcal{A} \; dz \\
+ \left(\frac{\lambda}{T}\frac{\partial T}{\partial z}(b) - \frac{\lambda}{T}\frac{\partial T}{\partial z}(a)\right)
\end{multline*}
We observe that the total internal entropy production is 
\[
\sigma=\int_a^b\frac{\lambda}{T^2}\left(\frac{\partial T}{\partial z}\right)^2 + \frac{L_A}{T^2}\left(\frac{\partial \mu_A}{\partial z}\right)^2 + \frac{L_B}{T^2}\left(\frac{\partial \mu_B}{\partial z}\right)^2 + \frac{r}{T^2}\mathcal{A} \; dz
\] 
Furthermore, $\{S | G_1(:,1)\tfrac{\partial}{\partial z} | U\}=\frac{\partial \mu_A}{\partial z}$, $\{S| G_1(:,2)\tfrac{\partial}{\partial z} |U\}=\frac{\partial \mu_B}{\partial z}$, $\{S | U\}=\frac{\partial T}{\partial z}$ and $\{S | G_0(:,i)|U\}=\mathcal{A}$ correspond, respectively, to the thermodynamic driving forces of mass and heat diffusion and chemical reactions. The diffusion-reaction process is hence given by the composition of the IPHS formulation of the diffusion process and the chemical reaction together with the mass and heat flows.

\section{Conclusion}\label{section_conclusion}
Boundary controlled irreversible port-Hamiltonian systems (BC-IPHS) on 1-dimensional spatial domains have been defined (Definition \ref{definition_IPHS} and \ref{definition_BC_IPHS}) by extending the formulation of reversible BC-PHS to irreversible thermodynamic systems controlled at the boundaries of their spatial domains. The structure of BC-IPHS has clear physical interpretation, characterizing the coupling between energy storing and energy dissipating elements, furthermore, the irreversible nature of the model is precisely expressed by the thermodynamic driving forces. By extending the definition of boundary port variables of BC-PHS to deal with the dissipative terms, a set of boundary port variables have been defined such that BC-IPHS are passive with respect to a given set of conjugated inputs and outputs. It is interesting to notice that when no irreversible phenomena is present, and thus the entropy coordinate is not considered, Definition \ref{definition_IPHS} and \ref{definition_BC_IPHS} define a BC-PHS. As for finite dimensional IPHS, the first and second principle are satisfied (Lemmas \ref{lemma_Conservation_of_internalenergy} and \ref{lemma_Irreversible_entropy_production}) as a structural property. The proposed formulation has been illustrated on the examples of an isentropic fluid, with and withot dissipation, heat conduction and a diffusion-reaction process. Future work will study the extension of passivity based boundary control design methods to BC-IPHS.

\section*{Acknowledgements}
This work has been partially funded by Chilean FONDECYT 1191544 and CONICYT BASAL FB0008 projects, through grants from the European Commision Marie Skodowska-Curie Fellowship, ConFlex ITN Network under reference code 765579 and from the ANR Agency by the EUR EIPHI, and the INFIDHEM project under the reference codes  ANR-16-CE92-0028 and ANR-17-EURE-0002 respectively.

\bibliographystyle{model2-names}
\bibliography{references_HM}

\appendix
\section{Port-Hamiltonian systems on infinite dimensional spaces}
\label{Appendix}

Let us now define a port-Hamiltonian system on an infinite dimensional space. To this end, first we introduce the definition of variational derivative of a functional, see \cite{Nijmeijer_book_1990}.

\begin{definition}\label{definition_variational_derivative}
Consider a functional $H$ defined by
\begin{equation*}
H(x)=\int_a^b h\left(z, x, x^{(1)},\ldots,x^{(n)}\right)dz
\end{equation*}
for any smooth real vector function $x(z)$, $z \in Z=\left(a,b\right)$ where the integrand $u$ is a smooth function of $x$ and its derivatives up to some order $n$. The variational derivative of the functional $H$ is denoted by $\frac{\delta H}{\delta x}$ and is the only function that satisfy for every $\epsilon \in \mathbb{R}$ and smooth real function $\delta x(z)$, $z \in Z$, such that their derivatives satisfy $\delta x^{(i)}(a) = \delta x^{(i)}(b) = 0$, $i = 0, \ldots,n$,
\begin{equation*}
H[x+\epsilon \delta x] = H[x] + \epsilon \int_a^b \frac{\delta H}{\delta x} \delta x dz + O(\epsilon^2)
\end{equation*}
\end{definition}
In the case when $h$ only depends on the vector function $x$ and not its derivatives, then the variational derivative is simply obtained by derivation of the integrand, i.e,
\[
\frac{\delta H}{\delta x} = \frac{\partial h}{\partial x}.
\]

An infinite dimensional PHS on a 1D spatial domain is characterized by the following PDE
\begin{equation}
\label{BCS}
  \frac{\partial x}{\partial t}(t,z) =P_1\frac{\partial}{\partial z}\left(\frac{\delta H}{\delta x}(t,z)\right) + (P_0-G_0)\frac{\delta H}{\delta x}(t,z),
\end{equation}
with $z \in (a,b)$,  $P_1 \in M_n(\mathbb{R})$\footnote{$M_n(\mathbb{R})$ denote the space of real $n\times n$ matrices} a nonsingular symmetric matrix, $P_0=-P_0^\top \in M_n(\mathbb{R})$, $G_0 \in M_n(\mathbb{R})$ with $G_0 \geq 0$ and $x$ taking values in $\mathbb{R}^n$. The controlled (and homogeneous) boundary conditions of (\ref{BCS}) are characterized by a matrix $W_{B}$ of appropriate size such that
\begin{equation*}
    v(t)=W_{B}
    \begin{bmatrix}
     \frac{\delta H}{\delta x}(t,b) \\
     \frac{\delta H}{\delta x}(t,a)
    \end{bmatrix}
  \end{equation*}
Considering the above boundary conditions as the input of the system, we can define an associate boundary output as
  \begin{equation*}
    y(t)=W_{C}
    \begin{bmatrix}
     \frac{\delta H}{\delta x}(t,b) \\
     \frac{\delta H}{\delta x}(t,a)
    \end{bmatrix}.
  \end{equation*}
If $W_{B}$ and $W_C$ satisfy
\begin{equation}\label{W_B-W_C-BC}
\begin{split}
W_{B}\tilde{\Sigma} W_{B}^\top = W_{C}\tilde{\Sigma} W_{C}^\top = 0 \\ 
W_{B}\tilde{\Sigma} W_{C}^\top = W_{C}\tilde{\Sigma} W_{B}^\top = I
\end{split}
\end{equation}
with $\tilde{\Sigma}=\left[\begin{smallmatrix} P_1^{-1} & 0 \\ 0 & -P_1^{-1} \end{smallmatrix}\right]$, it is not difficult to show that under some very general conditions \citep{gorrec:2005, Jacob_Zwart_2012} the change of energy of the system becomes 
\begin{equation*}
    \dot{H}(t) =y^\top(t)v(t)  - \int_{a}^{b} \frac{\delta H}{\delta x}^\top(t,z) G_0 \frac{\delta H}{\delta x}(t,z) dz
\end{equation*}
We can see from this equation that the dissipation in the system is characterized by the matrix $G_0$. Indeed, since the input and output act and sense at the boundary of the spatial domain, in the absence of internal dissipation ($G_0=0$) the system only exchanges energy with the environment through the boundaries. In this case the PHS fullfils
\begin{equation*}
    \dot{H}(t) =y^\top(t)v(t),
\end{equation*}
and the PHS is called conservative. This formulation has proven to be extremely useful to study the existence and uniqueness of solutions for the linear case, and to perform control synthesis for the general class of PHS \citep{gorrec:2005,Jacob_Zwart_2012,Ramirez_TAC_2014,Macchelli_TAC_2017,Ramirez_Automatica_2017}. One interesting feature of PHS is that they are applicable to hyperbolic and parabolic systems, however the PHS formulation of parabolic systems leads necessary to an implicit system.
\end{document}